\gdef\@copyrightpermission{
  \begin{minipage}{0.3\columnwidth}
   \href{https://creativecommons.org/licenses/by/4.0/}{\includegraphics[width=0.90\textwidth]{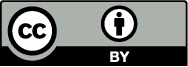}}
  \end{minipage}\hfill
  \begin{minipage}{0.7\columnwidth}
   \href{https://creativecommons.org/licenses/by/4.0/}{This work is licensed under a Creative Commons Attribution International 4.0 License.}
  \end{minipage}
  \vspace{5pt}
}
\theoremstyle{definition}
\newtheorem{assumption}{Assumption}
\newtheorem{proposition}{Proposition}
\title[Learning the Covariance of Treatment Effects Across Many Weak Experiments]{Learning the Covariance of Treatment Effects\\Across Many Weak Experiments}
\author{Aur\'elien Bibaut}
\affiliation{
    \institution{Netflix}
    \city{Los Gatos}
    \state{CA}
    \country{United States}
}
\author{Winston Chou}
\affiliation{
    \institution{Netflix}
    \city{Los Gatos}
    \state{CA}
    \country{United States}
}
\author{Simon Ejdemyr}
\affiliation{
    \institution{Netflix}
    \city{Los Gatos}
    \state{CA}
    \country{United States}
}
\author{Nathan Kallus}
\affiliation{
    \institution{Netflix}
    \city{Los Gatos}
    \state{CA}
    \country{United States}
}
\affiliation{
    \institution{Cornell University}
    \city{New York}
    \state{NY}
    \country{United States}
}
\begin{document}

\begin{abstract}
When primary objectives are insensitive or delayed, experimenters may instead focus on proxy metrics derived from secondary outcomes.  For example, technology companies often infer the long-term impacts of product interventions from their effects on short-term user engagement signals. We consider the meta-analysis of many historical experiments to learn the covariance of treatment effects on these outcomes, which can support the construction of such proxies.  Even when experiments are plentiful, if treatment effects are weak, the covariance of estimated treatment effects across experiments can be highly biased. We overcome this with techniques inspired by weak instrumental variable analysis.  We show that Limited Information Maximum Likelihood (LIML) learns a parameter equivalent to fitting total least squares to a transformation of the scatterplot of treatment effects, and that Jackknife Instrumental Variables Estimation (JIVE) learns another parameter computable from the average of Jackknifed covariance matrices across experiments. We also present a total covariance estimator for the latter estimand under homoskedasticity, which is equivalent to a $k$-class estimator.  We show how these parameters can be used to construct unbiased proxy metrics under various structural models.  Lastly, we discuss the real-world application of our methods at Netflix.
\end{abstract}

\keywords{surrogate outcomes, meta-analysis, weak instrumental variables}

\maketitle

\section{Introduction}

Projecting long-term treatment effects from short-term metrics is a ubiquitous problem in experimentation.  For example, technology companies seek to optimize insensitive primary metrics (such as habitual usage, subscriber retention, and long-term revenue), but are unable or unwilling to measure treatment effects on these metrics precisely.\footnote{For example, they may be interested in the effect of digital platform design on long-term user retention \citep{hohnhold2015focusing}, but unwilling to run a sufficiently large experiment for a long time to measure treatment effects on long-term retention.}
To address this issue, they optimize a suite of secondary metrics that are associated with the primary metrics, but more sensitive in terms of signal-to-noise and easier to measure in short-term experiments.  Under certain assumptions, the treatment effect on a ``surrogate index'' of multiple secondary metrics yields a precise estimate of the long-term treatment effect \cite{athey2016estimating,prentice1989surrogate}.

A class of statistical parameters that intuitively relates to the problem of constructing proxy metrics for a primary metric is the covariance matrix of true average treatment effects (ATEs) on primary and secondary metrics in previous experiments, and functions thereof. For example, when constructing weighted indices of secondary outcomes, it is intuitive to consider the Ordinary (OLS) and Total Least Squares (TLS) regression of true ATEs on a primary outcome on true ATEs on the secondary outcomes in the scatterplot of true ATEs over available historical experiments.  Here, by true ATE, we mean the unobserved mean on the population, in contrast to the estimated ATE actually observed on the experimental sample.

How these statistical parameters actually connect to the question of proxy metrics or surrogates is the first question we investigate in this paper. In Section \ref{sec:causal_models}, we demonstrate that statistical features of the covariance matrix of true ATEs have causal interpretations under different causal models. For instance, under these models, they can support inference on effects of novel treatments on long-term outcomes based on short-term observations.

The second question we study is the estimation of the covariance matrix of the true ATEs when the signal-to-noise ratio is small in each experiment, as is often the case in digital experimentation \citep{peysakhovich2018learning, deng2013improving, xie2016improving}. The statistical challenges are analogous to the many weak instrumental variables (IVs) setting. One question of interest in the weak instrument literature is whether allowing the number of instruments to diverge while holding their strength fixed yields consistent estimates \citep{mikusheva2022inference}. Analogously, we show that increasing the number of experiments, even if each maintains a low signal-to-noise ratio, enables consistent estimation of the covariance matrix of true treatment effects. In fact, each of the three methods we study has a weak-instrument estimator counterpart.

Our results demonstrate that we can reliably estimate the covariance matrix of true treatment effects as a parameter at the meta-analytic level, that is, from experiment-level aggregates.  In this way, we contribute to an emerging literature on meta-analytic approaches to surrogacy \cite{elliott2015surrogacy,cunningham2020interpreting, tripuraneni2023choosing}, which is particularly relevant given the large number of experiments conducted on modern online experimentation platforms.  The sheer volume of historical data from these platforms often makes unit-level analysis computationally challenging, if not prohibitive \cite{wong2019efficient}. Therefore, our methods are not only statistically robust, but also operationally feasible for large-scale experimentation. 

The paper is organized as follows:
\begin{itemize}
    \item In Section \ref{sec:data_and_statistical_setup}, we present the data collection process and the statistical parameters.
    \item In Section \ref{sec:causal_models}, we discuss the relationship of the statistical parameters to causal parameters and the construction of proxy metrics.
    \item In Section \ref{sec:estimators}, we present weak-IV-inspired estimators of the covariance matrix of true treatment effects and the OLS and TLS estimands in the scatterplot of true ATEs.
    \item In Section \ref{sec:simulations}, we conduct a simulation study to illustrate the performance of our proposed estimators and provide visual intuition on the mechanics of some of our estimators.
    \item Lastly, in Section \ref{sec:linear_models}, we describe the real-world application of our methods to derive a linear surrogate index for experimentation at Netlix.
\end{itemize}

\section{Statistical setup and notation}\label{sec:data_and_statistical_setup}

\subsection{Data}
We observe $N$ unit-level quadruples $O=(T, A, S, Y)$ where $T \in \{1,\ldots,K\}$ is the experiment index, $A \in \{0,1\}$ is the treatment arm index, $S$ is a $(G-1)$-dimensional vector of secondary metrics, and $Y$ is the primary metric of interest. We observe $N=Kn$ units divided between $K$ experiments, each of which has two arms with $n$ units.\footnote{While our results extend to the more realistic setting of multiple treatment arms of varying size per experiment, we focus on the two-arm, constant sample size case for ease of exposition.} Units in different experiments may be drawn from different superpopulations, but are assigned uniformly completely at random to cells. The observations $O_1,\dots,O_N\sim O$ are each identically distributed. They may not be independent due to the restriction of having exactly $n$ units in each experiment. Conditional on $T_1,\ldots,T_N$ and $A_1, \ldots, A_N$, any two observations are nonetheless independent.

\subsection{Treatment-effect Covariance Matrix}
For any $t=1,\ldots,K$, let $\tau_S(t) = E[S\mid A=1,T=t] - E[S\mid A=0,T=t]$ and $\tau_Y(t) = E[Y\mid A=1,T=t] - E[Y\mid A=0,T=t]$ be the true ATEs in experiment $t$ on the vector $S$ and on the scalar $Y$, respectively. 
Let 
\begin{align}
    \Lambda_K = 
    \begin{bmatrix}
    \Lambda_{YY,K} & \Lambda_{SY,K}^T \\
    \Lambda_{SY,K} & \Lambda_{SS,K}
    \end{bmatrix}
\end{align}
with
\begin{align}
    \Lambda_{YY,K} =& \frac{1}{K} \sum_{t=1}^K \tau_Y(t)^2 - \left(\frac{1}{K} \sum_{t=1}^K \tau_Y(t)\right)^2, \\
    \Lambda_{SY,K} =& \frac{1}{K} \sum_{t=1}^K \tau_S(t) \tau_Y(t)- \left( \frac{1}{K} \sum_{t=1}^K \tau_S(t) \right) \left( \frac{1}{K} \sum_{t=1}^K \tau_Y(t) \right), \\
    \Lambda_{SS,K} =& \frac{1}{K} \sum_{t=1}^K \tau_S(t) \tau_S(t)^\top - \left( \frac{1}{K} \sum_{t=1}^K \tau_S(t) \right) \left( \frac{1}{K} \sum_{t=1}^K \tau_S(t) \right)^\top.
\end{align}
That is, $\Lambda_K$ is the covariance matrix of true ATEs across the primary metric and secondary metrics over the population of experiments. We assume throughout that $\Lambda_{SS,K}$ is positive definite, meaning all its eigenvalues are strictly positive. Note that this requires at least $K\geq G-1$ tests.

\subsection{Statistical Parameters}
We consider two statistical parameters. Our first parameter is 
\begin{align}
    \theta_1(\Lambda_K) = \Lambda_{SS,K}^{-1} \Lambda_{SY,K}.
\end{align}
This is the OLS in the scatterplot of the $K$ true treatment effects.  While we discuss the exact causal interpretation of this parameter (which requires structural assumptions) in the next section, intuitively the OLS measures the statistical relationship between the surrogate metrics and the long-term metric \citep{tripuraneni2023choosing}.

Our second parameter is more complex:
\begin{align}
    \theta_{2, \Psi}(\Lambda_K) = -\gamma_{\Psi,S}(\Lambda_K) / \gamma_{\Psi,Y}(\Lambda_K), 
\end{align}
where $[\gamma_{\Psi,Y}(\Lambda_K),  \gamma_{\Psi,S}(\Lambda_K)^T]^T$ is a generalized eigenvector solving $(\Lambda_K - \kappa \Psi) \gamma = 0$ for the smallest possible $\kappa\geq0$ for which a solution exists, where $\Psi$ is a given positive definite matrix.  The parameter $\theta_{2,\Psi}(\Lambda_K)$ is the TLS on the $\Psi^{-1/2}$-transformed scatterplot of true treatment effects.  While this parameter is less intuitive, we show below that it coincides with $\theta_1(\Lambda_K)$ under certain structural assumptions.

We emphasize that these OLS and TLS parameters are defined in terms of only $K$ points in a $G$-dimensional scatterplot. The location of these points themselves (the true ATEs) are, nonetheless, unknown population quantities. Therefore, $\theta_1(\Lambda_K)$ is a population quantity, meaning it is a function of the unknown population distribution of the data $O$ (that is, of infinite draws of $O$).

\section{Causal Models and Relationship to Proxies}\label{sec:causal_models}

We now consider various causal models and examine how the statistical parameters $\theta_1(\Lambda_K)$ and $\theta_2(\Lambda_K)$ relate to causal parameters in these models.  Equivalently, this section derives the structural assumptions under which these statistical parameters can be given a causal interpretation in terms of the effect of $S$ on $Y$ and thereby used to construct proxy metrics for $Y$.

\subsection{Linear Structural Model without Direct Effects}
\label{sec:no-direct-effects}

Consider the following linear structural model with experiment-level fixed-effects and unmeasured confounding $U$ between $S$ and $Y$:
\begin{align}
    S &= \mu_S(T) + \pi_S(T) A + \gamma U + \eta \\
    Y &= \mu_Y(T) + S \beta + \delta U + \epsilon.
\end{align}
Here, $\pi_S(T)$ is the $(G - 1) \times 1$ vector of first-stage effects in experiment $T$ (that is, the ATEs on the $G - 1$ secondary metrics $S$ of the intervention trialed in experiment $T$), $\mu_S(T)$ is a $(G - 1) \times 1$ vector of per-experiment fixed effects on $S$, and $\mu_Y(T)$ is a scalar per-experiment fixed effect on $Y$.

Suppose that $A$ is randomized, and that the errors are such that $E[\eta \mid S, U, A] = 0$ and $E[\epsilon \mid S, U, A] = 0$. Then the statistical estimands $\tau_S(t)$ and $\tau_Y(t)$ identify the causal parameters $\pi_S(t)$ and $\pi_S(t) \beta$. It is straightforward to check that under the data-generating process induced by this structural model,
\begin{align}
    \beta = \theta_1(\Lambda_K) = \theta_{2,\Psi}(\Lambda_K)
\end{align}
for any positive definite $\Psi$.
The equality of the two statistical parameters can be understood as follows. Under the DGP induced by the above structural model, in which $S$ fully mediates the effect of $A$ on $Y$, $\tau_Y(t) = \tau_S(t) \beta$ for every $t \in [K]$.  This implies that $[1, -\beta]^T$ is an eigenvector associated with eigenvalue 0, which must be the smallest as $\Lambda_K$ is positive semi-definite.  Therefore, the TLS estimand is also the OLS estimand.

\paragraph{Relationship to proxies.} Now consider a new experiment $T=K+1$, in which we also randomize units equiprobably between treatment ($A=1$) and control ($A=0$). Denoting $S(t,a)$, $Y(t,a)$ the potential outcomes generated by the structural model (that is, the random variables obtained by setting values $T=t$ and $A=a$ and sampling $U$ and the error terms in the above equations), the ATE on $Y$ is related to the difference in arm-specific means on $S$ as follows: 
\begin{align}
    &E[Y(K+1,1) - Y(K+1,0)] \\
    &= E[S(K+1,1) - S(K+1,0)] \beta \\
    &= (E[S \mid T=K+1, A=1] - E[S \mid T=K+1, A=0]) \theta_i(\Lambda_K),
\end{align}
where $i =1$ or $i=(2,\Psi)$.
In words, given the data-generating distribution in $K$ historical experiments in which we observe $Y$ and $S$, and the data-generating distribution in an experiment in which we observe only the short-term outcome $S$, we can estimate the ATE on the long-term outcome. That is, $h(S) = \theta_i(\Lambda_K) S$ is an unbiased surrogate index for $Y$, meaning that ATEs on $h(S)$ equal ATEs on $Y$.

\subsection{Linear Structural Model with INSIDE-consistent Direct Effects}

\label{sec:direct-effects}

We now enrich the above linear structural model with direct effects $\pi_Y(T)$ on $Y$ that are not mediated by $S$:
\begin{align}
    S &= \mu_S(T) + \pi_S(T) A + \gamma U + \eta \\
    Y &= \mu_Y(T) + \pi_Y(T)A +  S \beta + \delta U + \epsilon.
\end{align}
It is generally impossible to disentangle the direct ($\pi_Y(T)$) and indirect/mediated ($\pi_S(T)\beta$) effects.  Still, it is possible to estimate meaningful causal parameters under the assumption of INstrument Strength Independent of Direct Effect (INSIDE) from the Mendelian randomization literature \cite{burgess2017interpreting}, which states that the first-stage effects of $A$ on $S$ are independent of the direct effects of $A$ on $Y$. In particular, INSIDE requires that the vector $\bm \pi_Y = [\pi_Y(1), \ldots, \pi_Y(K)]^\top$ of direct effects is orthogonal to the columns of the matrix $\bm{\Pi}_S = [\pi_S(1)^\top,\ldots,\pi_S(K)^T]^\top$ of first-stage effects on $S$, that is, $\bm\pi_Y^T \bm \Pi_S = 0$.

Under the data-generating distribution induced by randomized treatment assignment and the INSIDE-consistent causal model, the covariance matrix $\Lambda_K$ of true treatment effects is
\begin{align}
    \Lambda_K = 
    \begin{bmatrix}
        \beta^T \Lambda_{SS,K} \beta  & (\Lambda_{SS,K} \beta)^\top \\
        \Lambda_{SS,K} \beta & \Lambda_{SS,K}
    \end{bmatrix}
    + 
    \begin{bmatrix}
        \frac{\pi_Y^\top\pi_Y}{K} - \frac{(\pi_Y^\top \bm{1})^2}{K^2} & 0 \\
        0 & 0,
    \end{bmatrix}
\end{align}
with $\Lambda_{SS,K} = \frac{\Pi_S^\top \Pi_S}{K} - \frac{\bm{1}^\top \Pi_S \Pi_S^\top \bm 1}{K^2}$.

As can be seen from the above matrix expression, $\theta_1(\Lambda_K)=\beta$ identifies the structural parameter $\beta$. However, with direct effects, it is no longer the case in general that $[1, -\beta]^\top$ is an eigenvector of $\Lambda_K$ and therefore the TLS estimand $\theta_{2,\Psi}(\Lambda_K)$ diverges in general from $\beta$.

\paragraph{$S$-mediated ATE} In the presence of direct effects, it is no longer the case that ATEs on $Y$ are given by ATEs on $S$ times $\beta$, as $S$ no longer mediates the effect of the treatment fully. Still, $\beta$, which coincides with the OLS estimand $\theta_1(\Lambda_K)$, has a causal interpretation under the linear structural model. Let us introduce the potential outcomes $S(t,a)$ and $Y(t,a,s)$ generated by the structural model (that is, the random variables obtained by setting $T=t,A=a,S=s$ in the above equations and independently sampling $U$ and the error terms). It holds that the natural indirect effect on $Y$ through $S$ is given by 
\begin{align}
    &E[Y(K+1, a, S(K+1, 1)) - Y(K+1, a, S(K+1, 0)]\\
    &= E[ S(K+1, 1) - S(K+1, 0)] \beta \\
    &= (E[S \mid T=K+1, A=1] - E[S \mid T=K+1, A=0]) \theta_1(\Lambda_K),
\end{align}
for either $a\in\{0,1\}$.
That is, we can identify from the population distribution of historical experiments a proxy $h(S) = \theta_1(\Lambda_K)S$ such that in a new experiment, the ATE on $h(S)$ equals the part of the effect of the intervention on $Y$ that is mediated by $S$.

\subsection{Nonparametric IV Model}
\label{sec:npiv}
Relaxing the assumption of linearity, we now consider the following nonparametric IV model:
\begin{align}
    S &= \mu_S(T) + \pi_S(T) A + \eta \\
    Y &= \mu_Y(T) +  h(S) + \epsilon,
\end{align}
where $E[\eta \mid A, T] = 0$ and $E[\epsilon \mid A, T] = 0$.
We now show that, under some assumptions, $\theta_1$ identifies
a functional of $h$.
\begin{assumption}[Small effects]\label{asm:small_effects}
    $\left\lVert \pi_S(t) \right\rVert_\infty \leq \epsilon$ for every $t \in [K]$ for some $\epsilon >0$.
\end{assumption}
We will think of $\epsilon$ as a small quantity, which is often reasonable in digital experiments.
\begin{assumption}[Bounded Hessian]\label{asm:bounded_hessian}
    $h$ is twice-differentiable and $\left\lVert \nabla^2 h \right\rVert \leq M$ for some $M >0$, where the norm is the nuclear norm.
\end{assumption}
\begin{proposition}
    Suppose that Assumptions 1-2 hold. Then, 
    \begin{align}
        \theta_1(\Lambda_K) =& \left(\sum_{t=1}^K \pi_S(t) \pi_S(t)^\top \right)^{-1} \sum_{t=1}^K  \pi_S(t) \pi_S(t)^\top E[\nabla h(S(t,0))] \\
        &+ O(M \epsilon),
    \end{align}
where $S(t,0)$ is the potential outcome generated by the structural model under the control treatment in experiment $t$. 
\end{proposition}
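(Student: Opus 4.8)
The plan is to evaluate the two true-ATE sequences $\tau_S(t)$ and $\tau_Y(t)$ explicitly under the nonparametric IV model, substitute them into $\theta_1(\Lambda_K)=\Lambda_{SS,K}^{-1}\Lambda_{SY,K}$, and control the residual error using a second-order Taylor expansion of $h$ together with Assumptions~\ref{asm:small_effects}--\ref{asm:bounded_hessian}.

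First I would identify the ATEs. Since $A$ is assigned completely at random, so that $\eta$ is independent of $A$ given $T$, and $E[\epsilon\mid A,T]=0$, the structural equations give $\tau_S(t)=\pi_S(t)$ and $\tau_Y(t)=E[h(S(t,1))-h(S(t,0))]$; and because the potential outcomes of $S$ are generated from the same noise draw, $S(t,1)=S(t,0)+\pi_S(t)$ with $S(t,0)=\mu_S(t)+\eta$. Writing the exact integral remainder $h(s+v)-h(s)=\nabla h(s)^\top v+\int_0^1(1-u)\,v^\top\nabla^2 h(s+uv)\,v\,du$ and taking expectations then yields $\tau_Y(t)=\pi_S(t)^\top g(t)+R_t$, where $g(t):=E[\nabla h(S(t,0))]$. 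By Assumption~\ref{asm:bounded_hessian}, together with the fact that the operator norm is dominated by the nuclear norm, $|v^\top\nabla^2 h(\cdot)\,v|\le M\|v\|_2^2$; combined with Assumption~\ref{asm:small_effects} this gives $|R_t|\le\tfrac12 M\epsilon^2$ uniformly in $t$. (Using the integral form rather than a mean-value point sidesteps any measurability fuss about the intermediate point.)

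Next I would substitute $\tau_S(t)=\pi_S(t)$ and $\tau_Y(t)=\pi_S(t)^\top g(t)+R_t$ into the centered moments. With $\bar\pi_S=\tfrac1K\sum_t\pi_S(t)$, we have $\Lambda_{SS,K}=\tfrac1K\sum_t\pi_S(t)\pi_S(t)^\top-\bar\pi_S\bar\pi_S^\top$ and $\Lambda_{SY,K}=\tfrac1K\sum_t\pi_S(t)\pi_S(t)^\top g(t)+\tfrac1K\sum_t\pi_S(t)R_t-\bar\pi_S\bar\tau_Y$. The first term of $\Lambda_{SY,K}$ is exactly the numerator of the claimed expression, so what remains is to show the $R_t$-terms and the mean-centering terms, after left-multiplication by $\Lambda_{SS,K}^{-1}$, are $O(M\epsilon)$. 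The $R_t$-contribution is immediate: $\|\tfrac1K\sum_t\pi_S(t)R_t\|\le\epsilon\cdot\tfrac12 M\epsilon^2=O(M\epsilon^3)$, and since Assumption~\ref{asm:small_effects} forces every eigenvalue of $\Lambda_{SS,K}$ to be $O(\epsilon^2)$ while the standing positive-definiteness keeps them bounded away from $0$ at the same scale, $\|\Lambda_{SS,K}^{-1}\|=O(\epsilon^{-2})$, so the product is $O(M\epsilon)$. For the $\bar\pi_S$-terms I would apply the Sherman--Morrison identity to $\Lambda_{SS,K}=\tfrac1K\sum_t\pi_S(t)\pi_S(t)^\top-\bar\pi_S\bar\pi_S^\top$ and track the resulting rank-one corrections using $\|\bar\pi_S\|\le\epsilon$.

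The step I expect to be the main obstacle is precisely this spectral control of $\Lambda_{SS,K}^{-1}$: the clean $O(M\epsilon)$ is really a bound of order $M\epsilon^3/\lambda_{\min}(\Lambda_{SS,K})$, so one genuinely needs the rescaled first-stage effects $\pi_S(t)/\epsilon$ to span the space of secondary metrics non-degenerately for the rate to hold as written — this is where the argument is most sensitive to how Assumption~\ref{asm:small_effects} and the positive-definiteness of $\Lambda_{SS,K}$ are quantified relative to $\epsilon$. The remaining delicate point is the bookkeeping of the $\bar\pi_S$-centering terms, which a priori sit at the same $O(\epsilon^2)$ order as $\Lambda_{SS,K}$ itself and so must be carried through the calculation rather than discarded.
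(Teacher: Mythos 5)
Your proof follows essentially the same route as the paper's: use randomization to identify $\tau_S(t)=\pi_S(t)$ and $\tau_Y(t)=E[h(S(t,0)+\pi_S(t))-h(S(t,0))]$, Taylor-expand $h$ around $S(t,0)$, and bound the second-order remainder by $\tfrac12 M\epsilon^2$ via Assumptions 1--2; your integral-form remainder and the remark that the nuclear norm dominates the operator norm are in fact slightly more careful than the paper's mean-value-point statement. The two obstacles you flag are real, and they are precisely what the paper glosses over: the paper's displayed chain works with uncentered moments (it writes $\theta_1(\Lambda_K)$ as $\left(\sum_t\tau_S(t)\tau_S(t)^\top\right)^{-1}\sum_t\pi_S(t)E[h(S(t,0)+\pi_S(t))-h(S(t,0))]$, silently dropping the $\bar\pi_S$-centering terms), and its $O(M\epsilon)$ absorbs the implicit requirement that the eigenvalues of $\Lambda_{SS,K}$ (equivalently of $\tfrac1K\sum_t\pi_S(t)\pi_S(t)^\top$) are of order $\epsilon^2$ and not smaller, i.e.\ that the rescaled first-stage effects span the space of secondary metrics non-degenerately. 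Consequently your Sherman--Morrison bookkeeping is not needed to reproduce the paper's argument -- though, as you suspect, if one insists on the centered definition of $\Lambda_K$ with $\bar\pi_S\neq 0$ and with $E[\nabla h(S(t,0))]$ varying substantially across experiments, that centering step does not in general close at $O(M\epsilon)$; this is a looseness in the statement and the paper's own proof rather than a defect of your plan.
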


\begin{proof}
    By definition of $\theta_1(\Lambda_K)$, the randomization of $A$, and from a second-order Taylor expansion,
    \begin{align}
        &\theta_1(\Lambda_K) \\
        =& \left(\sum_{t=1}^K \tau_S(t) \tau_S(t)^\top \right)^{-1} \sum_{t=1}^K  \pi_S(t) E[h(S(t,0) + \pi_S(t)) - h(S(t,0))] \\
        =& \left(\sum_{t=1}^K \pi_S(t) \pi_S(t)^\top \right)^{-1} \sum_{t=1}^K \pi_S(t) \pi_S(t)^\top E[\nabla h (S(t,0))] \\
        &+ \left(\sum_{t=1}^K \pi_S(t) \pi_S(t)^\top \right)^{-1} \sum_{t=1}^K \pi_S(t) \pi_S(t)^\top E[\nabla^2 h(\widetilde{S}(t,0)) \tau_S(t)],
    \end{align}
    for some $\widetilde S(t,0)$ on the segment $[S(t,0), S(t,0) + \tau_S(t)]$.
    Assumptions \ref{asm:small_effects} and \ref{asm:bounded_hessian} then yield that the second term above is $O(M \epsilon)$.
\end{proof}

In words, Proposition 1 above tells us that under the NPIV model above, the OLS statistical estimand identifies an instrument-strength-weighted average of the expected gradient of the structural function $h$.

\section{Estimating Treatment Effect Covariances}\label{sec:estimators}

In what follows, we will concatenate the variables $S$ and $Y$ in the vector $D = [Y, S]$ and also write $\tau(t)=[\tau_Y(t),\tau_S(t)]$.

\subsection{A Naive Estimator}

A naive estimator for the covariance matrix $\Lambda_K$ of true treatment effects is simply the empirical covariance matrix $\hat \Sigma_K$ of the estimated ATEs $\widehat \tau_S(t)$ and $\widehat \tau_Y(t)$:
\begin{align}
    \widehat \Sigma_K = \frac{1}{K} \sum_{t=1}^K \widehat \tau(t) \widehat \tau(t)^\top - \left(\frac{1}{K} \sum_{t=1}^K \widehat \tau(t)\right) \left(\frac{1}{K} \sum_{t=1}^K \widehat \tau(t)\right)^\top,
\end{align}
with $\widehat \tau(t) = \frac{1}{n/2} \sum_{i=1: A_i=1, T_i=t} D_i - \frac{1}{n/2} \sum_{i=1: A_i=0, T_i=t} D_i$.

What is the target estimand of $\widehat \Sigma_K$? The total variance formula yields that:
\begin{align}
    E \widehat \Sigma_K = \Lambda_K + \frac{4}{n} \bar\Omega_K, 
\end{align}
where 
\begin{align}
    \bar\Omega_K =& \frac{1}{K} \sum_{t=1}^K  \frac{\Omega_{t,1} + \Omega_{t,0}}{2}
\end{align}
and $\Omega_{t,a} = \mathrm{Cov}(D \mid T=t, A=a)$ is the within-cell covariance matrix of $D$. In other words, $4 \bar\Omega_K / n$ is the covariance of the unit-level sampling variance, or ``noise.''

In industrial (especially digital) experimentation, ATEs typically exhibit a low signal-to-noise ratio. This implies that the term $4 \bar\Omega_K / n$ is often significant relative to $\Lambda_K$ \cite{cunningham2020interpreting,peysakhovich2018learning}. As a result, under our first or second structural model, the estimate of $\beta$ based on the empirica l covariance matrix, $\widehat \beta  = \widehat{\Sigma}_{SS,K}^{-1} \widehat{\Sigma}_{SY,K}$, is biased and remains inconsistent for $\beta$ even as we increase the number of experiments (that is, as $K\to\infty$).  Since the two-stage least squares estimator under a categorical instrument equals OLS on the group means, it holds that 
$\widehat \beta$ is 2SLS with
\begin{itemize}
    \item dependent variable $(2A-1) Y$,
    \item endogenous variable $(2A-1) S$,
    \item instrument $T$.
\end{itemize}
The fact that $\widehat \beta$ is biased when noise is non-negligible reflects the well-established fact that 2SLS is inconsistent under weak instruments \cite{angrist1999jackknife}.
This suggests that insights from the weak instrumental variable literature could improve the estimation of $\Lambda_K$. In the next sections, we will consider three different methods inspired by weak IV estimators: (1) the Jackknife Instrumental Variable estimator (JIVE), (2) the Limited Information Maximum Likelihood (LIML) estimator, and (3) the general form of $k$-class estimators.

\subsection{Jackknifed Covariance Matrix of Treatment Effects}

We propose an estimator of $\Lambda_K$ inspired by the JIVE estimator \cite{angrist1999jackknife}. Consider the transformed vector $\widetilde D = 2(2A - 1) D$. Observe that $E[\widetilde D \mid T = t]  =\tau(t)= E[D \mid T=t, A=1] - E[E \mid T=t, A=0]$.

For any $t,a,i$, let 
\begin{align}
    \widehat \tau(t) = \frac{1}{n} \sum_{i:T_i  = t} \widetilde D_i \qquad \text{and} \qquad \widehat \tau_{-i}(t) = \frac{1}{n-1} \sum_{\substack{j \neq i \\ T_j = t}} \widetilde D_j 
\end{align}
be the estimated ATE on $D$ in experiment $t$, and its counterpart that leaves out observation $i$. Let
\begin{align}
    \widehat\Lambda_2 (t) = \frac{1}{n} \sum_{i: T_i = t} \widehat \tau_{-i}(t) \widetilde D_i ^\top
\end{align}
be the Jackknifed second-order moments matrix in experiment $t$, and let
\begin{align}
    \widehat \Lambda_K^{\mathrm{JK}} = \frac{1}{K} \sum_{t=1}^K \widehat\Lambda_2 (t) - \left(\frac{1}{K} \sum_{t=1}^K \widehat \tau(t)\right) \left( \frac{1}{K} \sum_{t=1}^K \widehat \tau(t) \right)^\top
\end{align}
be the Jackknifed covariance matrix. The Jackknife construction ensures that only the common source of variation between units in the same experiments is captured: the variation due to the treatment assignment, as opposed to unit-level noise. It is immediate that the Jackknifed second-order moment matrix is an unbiased estimate of the second-order moment matrix of the true treatment effects. It can also readily be checked that under mild conditions the second order term has bias $O(N^{-1})$; in other words, the bias scales with the total number of units in all experiments as opposed to the number of observations per experiment. We state this formally in the following proposition:
\begin{proposition}
    Suppose that $\max_{t,a}\left\lVert \Omega_{t,a} \right\rVert \leq M$ for some $M >0$, where the norm is any matrix norm. Then $E[\widehat \Lambda_K^{\mathrm{JK}}] = \Lambda_K + O(M/N)$.
\end{proposition}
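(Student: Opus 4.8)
The plan is to condition on the experiment labels $T_1,\dots,T_N$ throughout and to split $\widehat\Lambda_K^{\mathrm{JK}}$ into its two additive pieces: the averaged Jackknifed second-moment matrices $\frac1K\sum_{t=1}^K\widehat\Lambda_2(t)$ and the subtracted outer product $\bar{\widehat\tau}\,\bar{\widehat\tau}^{\top}$, where $\bar{\widehat\tau} := \frac1K\sum_{t=1}^K\widehat\tau(t)$. I would show that the first has expectation exactly $\frac1K\sum_t\tau(t)\tau(t)^{\top}$, and that the second has expectation $(\frac1K\sum_t\tau(t))(\frac1K\sum_t\tau(t))^{\top}$ plus a remainder of operator norm $O(M/N)$; subtracting the two and recognizing the definition of $\Lambda_K$ then yields the claim. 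The structural fact I will lean on is that, conditionally on $T_1,\dots,T_N$, the observations are independent (as stated in the setup) and identically distributed within each experiment.

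For the first piece, I would rewrite $\widehat\Lambda_2(t) = \frac{1}{n(n-1)}\sum_{i\neq j:\,T_i=T_j=t}\widetilde D_j\,\widetilde D_i^{\top}$, so that by within-experiment exchangeability, $E[\widehat\Lambda_2(t)\mid T]$ equals $E[\widetilde D_j\,\widetilde D_i^{\top}\mid T]$ for a single ordered pair $i\neq j$ in experiment $t$. Because $i\neq j$, the two factors are conditionally independent given $T$, each with conditional mean $E[\widetilde D\mid T=t]=\tau(t)$, so $E[\widetilde D_j\,\widetilde D_i^{\top}\mid T]=\tau(t)\tau(t)^{\top}$; since this holds for every realization of $T$, $E[\widehat\Lambda_2(t)]=\tau(t)\tau(t)^{\top}$. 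This is precisely where the leave-one-out construction earns its keep: the naive $\widehat\Sigma_K$ instead retains the diagonal $i=j$ terms, whose conditional expectation is $E[\widetilde D\widetilde D^{\top}\mid T=t]$, and which therefore inject the within-cell second moments (hence the $\tfrac{4}{n}\bar\Omega_K$ bias of $\widehat\Sigma_K$).

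For the second piece, I would use $E[\bar{\widehat\tau}\,\bar{\widehat\tau}^{\top}] = E[\bar{\widehat\tau}]\,E[\bar{\widehat\tau}]^{\top} + \mathrm{Cov}(\bar{\widehat\tau})$ with $E[\bar{\widehat\tau}] = \frac1K\sum_t\tau(t)$, and expand $\mathrm{Cov}(\bar{\widehat\tau}) = K^{-2}\sum_{t,t'}\mathrm{Cov}(\widehat\tau(t),\widehat\tau(t'))$. The off-diagonal terms vanish: conditionally on $T$, units in distinct experiments are independent and $E[\widehat\tau(t)\mid T]=\tau(t)$ is non-random, so the law of total covariance gives $\mathrm{Cov}(\widehat\tau(t),\widehat\tau(t'))=0$ for $t\neq t'$. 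For the diagonal, $\widehat\tau(t)$ is an average of $n$ conditionally i.i.d.\ terms $\widetilde D_i$, hence $\mathrm{Cov}(\widehat\tau(t)) = E[\mathrm{Cov}(\widehat\tau(t)\mid T)] = n^{-1}\mathrm{Cov}(\widetilde D\mid T=t)$; using $(2A-1)^2=1$, a short computation gives $\mathrm{Cov}(\widetilde D\mid T=t) = 2(\Omega_{t,1}+\Omega_{t,0}) + (\mu_{t,1}+\mu_{t,0})(\mu_{t,1}+\mu_{t,0})^{\top}$, where $\mu_{t,a}:=E[D\mid T=t,A=a]$, and this has operator norm $O(M)$ once $M$ is understood to dominate the per-cell means as well (automatic for metrics on a fixed scale). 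Therefore $\|\mathrm{Cov}(\bar{\widehat\tau})\| \le K^{-2}\cdot K\cdot O(M/n) = O(M/N)$.

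Assembling, $E[\widehat\Lambda_K^{\mathrm{JK}}] = \frac1K\sum_t\tau(t)\tau(t)^{\top} - (\frac1K\sum_t\tau(t))(\frac1K\sum_t\tau(t))^{\top} - \mathrm{Cov}(\bar{\widehat\tau}) = \Lambda_K + O(M/N)$. The step carrying the conceptual weight --- more than any technical difficulty --- is recognizing why the rate is $1/N$ rather than $1/n$: the Jackknifed moment matrices are \emph{exactly} unbiased, so all residual bias lives in the subtracted term $\bar{\widehat\tau}\,\bar{\widehat\tau}^{\top}$, where the $K^{-2}$ normalization meets a sum of $K$ per-experiment sampling variances each of size $O(M/n)$, producing $O(M/(Kn))=O(M/N)$. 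The only hypothesis quietly used beyond $\max_{t,a}\|\Omega_{t,a}\|\le M$ is that $M$ also controls the per-cell means appearing in $\mathrm{Cov}(\widetilde D\mid T=t)$; everything else --- the conditional-to-unconditional passage and the norm bookkeeping --- is routine.
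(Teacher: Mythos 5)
Your proof is correct and follows the same route the paper sketches (exact unbiasedness of the leave-one-out second-moment matrices, with the entire residual bias coming from $\mathrm{Cov}\bigl(\tfrac1K\sum_t\widehat\tau(t)\bigr)$, which is a $K^{-2}$-normalized sum of $K$ per-experiment sampling variances of order $M/n$); the paper only asserts these two facts, so your write-up supplies the missing details, including the correct computation $\mathrm{Cov}(\widetilde D\mid T=t)=2(\Omega_{t,1}+\Omega_{t,0})+(\mu_{t,1}+\mu_{t,0})(\mu_{t,1}+\mu_{t,0})^\top$. Your caveat is also well taken: the stated hypothesis $\max_{t,a}\lVert\Omega_{t,a}\rVert\le M$ alone does not control the mean term $(\mu_{t,1}+\mu_{t,0})(\mu_{t,1}+\mu_{t,0})^\top$ arising because $\widetilde D=2(2A-1)D$ folds outcome levels into its variance, so the $O(M/N)$ claim implicitly requires the cell means to be bounded (or the metrics centered) as well --- a condition the proposition's statement glosses over.
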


\begin{proof}
    As the second-order moment Jackknifed matrix $\allowbreak K^{-1} \allowbreak \sum_{t=1}^K \widehat \Lambda_2(t)$ is an unbiased estimate of $K^{-1} \sum_{t=1} \tau(t) \tau(t)^\top$, the bias reduces to that of the second term, which we rewrite as $N^{-1} Z_N Z_N^\top$, where $Z_N = N^{-1/2} \sum_{i=1} \widetilde D_i - E[\widetilde D_i]$.  The (sequence of) random variables $Z_N$ can be controlled under various set of conditions. For example, suppose $0 \leq m \preccurlyeq \Omega_{t,a} \preccurlyeq M < \infty$ for every $t,a$, and suppose  that $K^{-1} \sum_{t} \Omega_{t,1} + \Omega_{t,0} \to \Omega$ in probability for some $\Omega$. Then, under a Lyapunov condition, a standard central limit theorem guarantees that $Z_N \rightsquigarrow Z$ where $Z \sim \mathcal{N}(0,\Omega)$. Then $E[(Z_N Z_N^\top)] \to E[Z Z^\top)] \preccurlyeq M I_G$, which implies the result.
\end{proof}

As a corollary, if it admits a probability limit, the plug-in estimator $\theta_i(\widehat \Lambda_K^{\mathrm{JK}})$ for either $i =1$ or $i=(2,\Psi)$ is a consistent estimator of the parameter $\theta_i(\Lambda_K)$ as the number of experiments grows, $K \to \infty$, even as the size of each experiment remains fixed. The former is equivalent to the JIVE estimator, while the latter is equivalent to the estimator proposed in \cite{hahn2004estimation}.

Note that the Jackknifed covariance matrix estimator does not require homoskedasticity of the unit-level noise --- that is, it does not require $\Omega_{a,t} = \Omega$ for all $t, a$ --- for consistency.  However, it does require us to Jackknife the unit-level data in every experiment, which can be computationally prohibitive when $n$ and $K$ are large.  In the next two sections, we therefore consider how we can leverage homoskedasticity when it is a reasonable assumption.

\subsection{Estimating Treatment Effect Covariances by Isotropizing Noise}
\label{sec:tls}

In this section and the next, we assume a common noise covariance matrix $\Omega$ across all experiments and treatment arms, that is, $\Omega_{t,a} = \Omega$ for every $t,a$. We will further assume that we know $\Omega$ to a high relative precision. These are often reasonable assumptions in digital experiments: while treatment effects are small, correlations across metrics tend to be (1) stable across experiments, and (2) non-negligible, and thus easy to estimate with a high signal-to-noise ratio by leveraging populations across multiple experiments (e.g., the entire user base). Furthermore, considering metrics that are sufficiently nonredundant ensures that $\Omega$ is well-conditioned.

Under the homoskedasticity assumption, the total covariance formula yields that
\begin{align}
    E\widehat \Sigma_K = \Lambda_K + \frac{4}{n} \Omega.
\end{align} 
Under known $\Omega$, we can multiply $\widehat \Sigma_K$ on both sides by $\Omega^{-1/2}$ to obtain a transformed $\Lambda_K$ plus isotropic noise:
\begin{align}
    E\Omega^{-1/2}\widehat \Sigma_K \Omega^{-1/2} = \Omega^{-1/2}\Lambda_K\Omega^{-1/2} + \frac{4}{n} I_G.
\end{align} 
Because adding a multiple of the identity to a matrix does not affect its eigenvectors nor the rank of their corresponding eigenvalues, the smallest eigenvector of $\Omega^{-1/2}\Lambda_K\Omega^{-1/2}$ is the smallest eigenvector of $E\Omega^{-1/2}\widehat \Sigma_K \Omega^{-1/2}$ (where by smallest eigenvector we mean an eigenvector assoicated with the smallest eigenvalue). Denote this eigenvector by $\widetilde \gamma_K$. Because $\Lambda_K$ and $\Omega$ both have non-negative eigenvalues, applying the transformation $\Omega^{-1/2}$ also does not affect the ordering of their eigenvectors, so we can recover the smallest eigenvector of $\Lambda_K$ by applying $\Omega^{-1/2}$ to $\widetilde \gamma_K$. Denote by $\widehat \gamma_K$ the result of this procedure applied to the estimate $\widehat \Sigma_K$ itself (that is, rather than its unknown expectation), and let $\widehat \theta^{LIMLK} = -\widehat \gamma_{K,S} / \widehat \gamma_{K,Y}$. Under the existence of the appropriate probability limits, $\widehat \theta^{LIMLK}$ is a consistent estimate of $\theta_{2,\Omega}(\Lambda_K)$.

As our notation suggests, $\widehat \theta^{LIMLK}$ is equal to the LIMLK (LIML with Known noise covariance matrix) estimate with dependent variable $\widetilde Y= 2(2A-1)Y$, endogenous predictors $\widetilde S = 2(2A-1)S$, and instrument $T$. (This can be observed directly from the definition of LIMLK \cite{anderson2009limited}.)

As the smallest eigenvector of the covariance matrix of observations is the statistical target of Total Least Squares (TLS), the procedure we just described, and thus LIMLK, can be implemented by: (1) transforming the observations by applying $\Omega^{-1/2}$, (2) running TLS in the transformed space, and (3) transforming the smallest eigenvector obtained from TLS by applying $\Omega^{-1/2}$.  We illustrate this procedure visually in Section~\ref{sec:simulations}.
 
A causal inference implication of this method is that, under the linear structural model presented in subsection 3.1, the structural coefficient $\beta$ equals $\theta_{2,\Omega}(\Lambda_K)$, which we can estimate consistently with $\widehat \theta^{LIMLK}$. However, under the presence of direct effects, as mentioned in Section \ref{sec:direct-effects}, $\theta_{2,\Omega}(\Lambda_K)$ no longer equals $\beta$, and therefore treatment effects on $\theta_{2,\Omega}(\Lambda_K)S$ cannot be interpreted as the part of the treatment effect of $A$ on $Y$ that is mediated by $S$.

\subsection{Estimating Treatment Effect Covariances by Subtracting Noise}

In the previous subsection, we leveraged only the direction of the known $\Omega$ but not its scale, allowing us to make a connection to LIMLK. (Note that the above procedure would give the same result if we used $\rho \Omega$ for $\rho > 0$ instead of $\Omega$.) Under known $\Omega$, there is a perhaps more straightforward estimation procedure for $\Lambda_K$: subtract $(4/n) \Omega$ from $\widehat \Sigma_K$. Formally, letting $\widehat \Lambda_K^{\mathrm{TC}} = \widehat \Sigma_K - (4/n) \Omega$ (where TC stands for Total Covariance), we have that
\begin{align}
    E \widehat \Lambda_K^{\mathrm{TC}}  = \Lambda_K,
\end{align}
and therefore, under the existence of the appropriate probability limits, the plug-in estimator $\theta_i(\widehat \Lambda_K^{\mathrm{TC}})$, for either $i =1$ or $i=(2,\Psi)$, provides a consistent estimator for $\theta_i(\Lambda_K)$.  In particular, when either $S$ fully mediates the effect of $A$ on $Y$ (no direct effects) or when direct effects follow the INSIDE assumption, we can consistently estimate the structural parameter $\beta$ with $\theta_1(\widehat \Lambda_K^{\mathrm{TC}})$.

\paragraph{Connection with IV-estimators.} Defining the matrix of centered observations  $\widetilde D^0 = [\widetilde Y^0, \widetilde S^0]$ as $\widetilde D^0 = D (I -N^{-1} \bm{1} \bm{1}^\top) \widetilde D$, and using the empirical within-experiment covariance for $\Omega$, we can readily check that 
\begin{align}
    \theta_1(\widehat \Lambda_K^{\mathrm{TC}}) = \frac{ (\widetilde S^0)^\top (I -  (1 + 4/n) M_T) \widetilde Y^0}{(\widetilde S^0)^\top (I -  (1 + 4/n) M_T) \widetilde S^0}, \label{eq:k-class}
\end{align}
with $M_T = I - P_T$, $P_T = \widetilde{T} (\widetilde{T}^\top \widetilde{T})^{-1} \widetilde{T}^\top$, and  $\widetilde T$ is the $N \times K$ matrix of one-hot encodings of experiment membership $T$ (that is $\widetilde T_{i,t} = \bm{1}\{ T_i = t \}$). One might recognize from \eqref{eq:k-class} that $\theta_1(\widehat \Lambda_K^{\mathrm{TC}})$ is a so-called $k$-class IV estimator, with $k = 1 + 4 / n$.

\section{Simulation Study}\label{sec:simulations}

To provide insight into our statistical setup and the performance of our estimators, we conduct a simulation study.  The parameters of our simulations are chosen to reflect aspects of real-world data.  The unit-level noise, denoted by $\Omega$ in our statistical setup, is typically large relative to the variance-covariance matrix of treatment effects, denoted by $\Lambda_K$, and is also potentially anti-correlated.  For example, clicks may be positively correlated with conversions outside of any experiment, but treatments that increase ``click-baitiness'' can reduce conversions.

Following this example, the top panel of Figure~\ref{fig:setup} depicts a situation where treatment effects on a proxy metric $S$ are negatively correlated with treatment effects on a primary objective $Y$.  Throughout the figure, the white arrow points in the direction of the covariance in true treatment effects.  In the middle panel, we plot the unit-level noise, which is positively correlated between $S$ and $Y$, represented by the black line.  Lastly, in the bottom panel, we plot the sampling distribution of treatment effect estimates with a fixed sample size.

\begin{figure}[h]
    \includegraphics[width=2in]{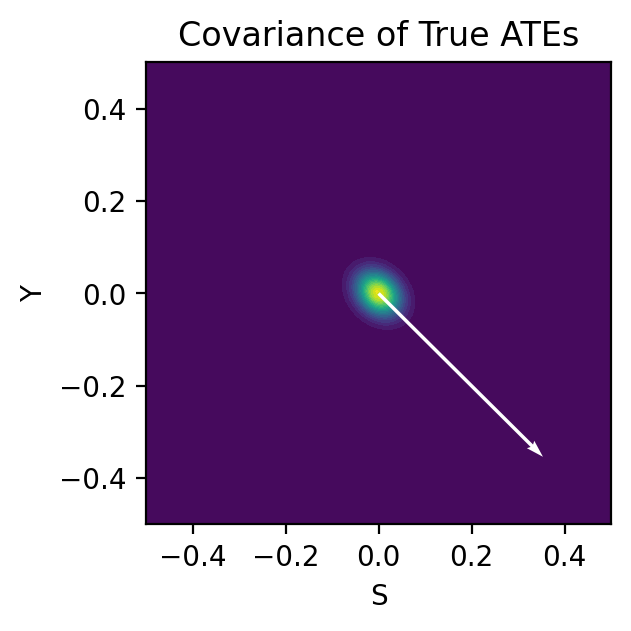}
    \includegraphics[width=2in]{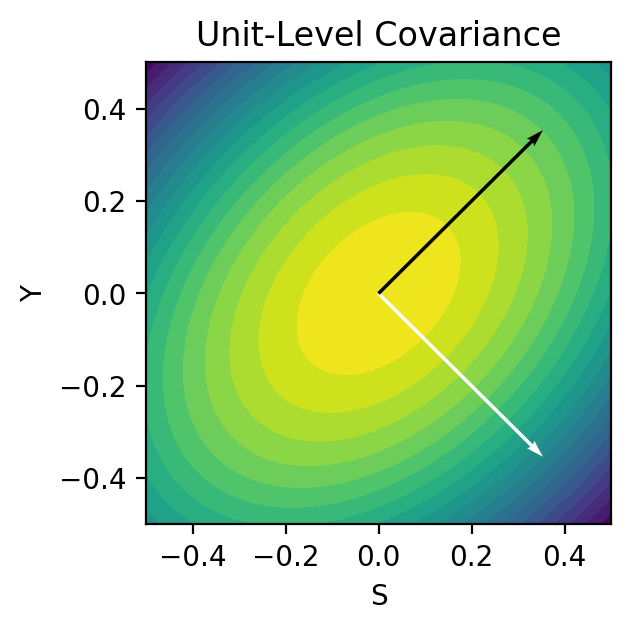}
    \includegraphics[width=2in]{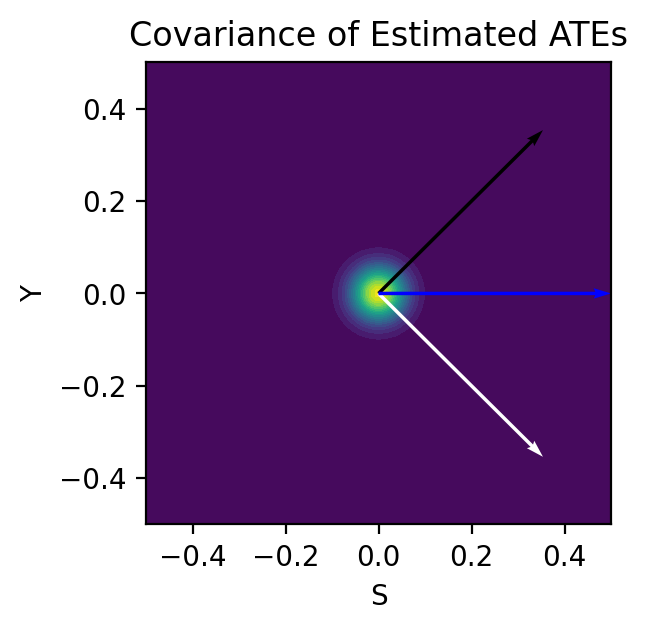}
    \caption{How Measurement Error Distorts Treatment Effect Covariances. This figure shows a hypothetical treatment effect covariance matrix, a unit-level sampling covariance matrix (``noise''), and the covariance of matrix of \emph{estimated} treatment effects, which is a weighted combination of these.}
    \label{fig:setup}
\end{figure}

As bottom panel of Figure~\ref{fig:setup} shows, the unit-level noise can overwhelm the treatment effect covariance when either the treatment effect covariance or the sample size is relatively small.  As a result, naively estimating $\theta_1(\Lambda_K)$ using the covariance matrix of the estimated treatment effects $\theta_1(\widehat \Sigma_K)$ will be biased in the direction of $\theta_1(\Omega)$.  In the absence of unit-level covariance (i.e., $\Omega=\omega I$), this bias is ``merely'' attenuation bias that preserves the direction of the relationship but biases estimates towards zero.  In the presence of unit-level covariance, the estimated covariance can be arbitrarily biased, and this bias is worse for small experiments \cite{cunningham2020interpreting}.

\subsection{LIMLK as Total Least Squares}

\begin{figure}[h!]
    \includegraphics[width=0.4\textwidth]{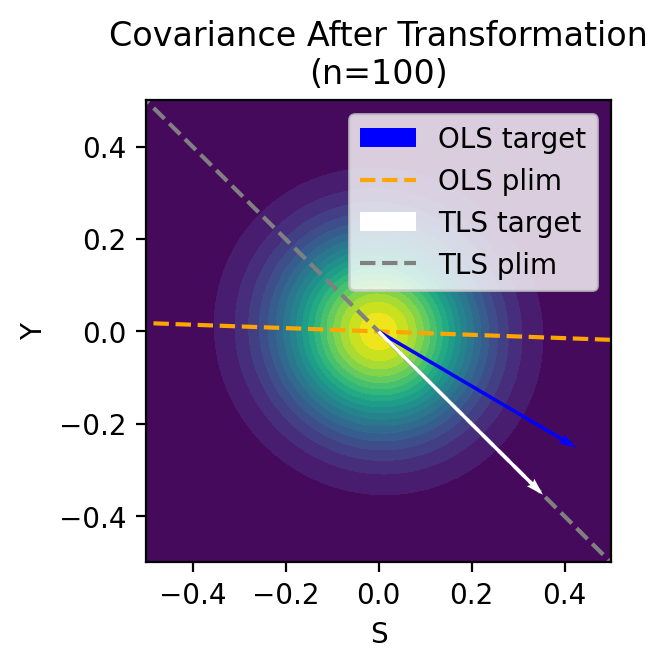}
    \includegraphics[width=0.4\textwidth]{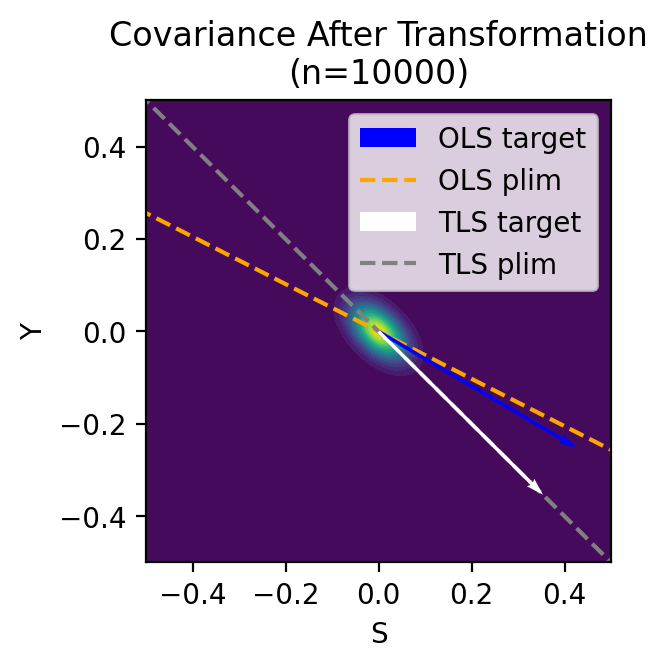}
    \caption{Effect of Transforming Estimated Treatment Effects by $\Omega^{-1/2}$. Transforming the covariance of estimated treatment effects renders the measurement error isotropic, so the TLS plim on this scatterplot coincides with the TLS target, whereas the OLS plim suffers from attenuation bias.}
    \label{fig:rotation}
\end{figure}

We now consider the LIMLK estimator as an alternative to the naive estimator based on the covariance matrix of estimated treatment effects.  As described in Section~\ref{sec:tls}, LIMLK is equivalent to performing Total Least Squares (TLS) after applying the linear transformation $\Omega^{-1/2}$ to the scatterplot of estimated treatment effects.  Intuitively, this is because applying this transformation on both sides to $\widehat \Sigma_K$ yields, in expectation, the transformed true covariance matrix $\Omega^{-1/2}\Lambda_K \Omega^{-1/2}$ plus isotropic noise --- a classic error-in-variables setup.  TLS is an effective method for addressing error-in-variables as isotropic noise does not change the eigenvectors of $\widehat \Sigma_K$.

We illustrate the effects of transforming $\widehat \Sigma_K$ by $\Omega^{-1/2}$ on least squares estimators in Figure~\ref{fig:rotation}.  In both panels, the blue arrow is the statistical parameter $\theta_1(\Omega^{-1/2} \Lambda_K \Omega^{-1/2})$, which we call the \textbf{OLS target}.  This is the OLS on the transformed scatterplot of true treatment effects.  In contrast, the orange line is the statistical parameter $\theta_1(\Omega^{-1/2} E\widehat \Sigma_K \Omega^{-1/2})$, which we call the \textbf{OLS plim}.  In words, this is the OLS on the transformed scatterplot of the estimated treatment effects as $K \to \infty$.  The white arrow and gray line in Figure~\ref{fig:rotation} are the analogous \textbf{TLS target}, $\theta_{2,\Omega}(\Lambda_K)$, and \textbf{TLS plim}, $\theta_{2,\Omega}(E\widehat \Sigma_K)$, respectively.  To illustrate the effect of sample size, the upper panel sets $n=100$, while the bottom panel sets $n=10000$.

Because transforming the scatterplot of estimated treatment effects isotropizes but does not eliminate the noise, the OLS plim of the transformed scatterplot will suffer from attenuation bias relative to the OLS target, with the magnitude of this bias decreasing in $n$ (evidenced by the convergence of the estimand to the target as $n\to\infty$).  However, isotropic noise does not change the eigenvectors of the covariance matrix, so increasing $n$ does not affect the TLS plim, which always aligns with the TLS target.  This motivates the use of TLS on the transformed scatterplot, which we show in Section $\ref{sec:tls}$ is equivalent to the LIMLK estimator.

\subsection{Comparing the Naive, LIMLK, and TC Estimators}
We now present the results of a simulation study of the empirical properties of our estimators.  The estimators evaluated are the naive OLS performed on the empirical covariance matrix of estimated effects, LIMLK, and the Total Covariance (TC) estimator, which performs OLS on the covariance matrix obtained by subtracting $(4/n) \Omega$ from $\widehat \Sigma_K$.

To illustrate the effects of unit-level noise on our estimators, we simulate two surrogates $S_1$ and $S_2$ with corresponding structural parameters $\beta_1$ and $\beta_2$.  $S_1$ is highly correlated with the primary outcome $Y$ in terms of the noise covariance matrix $\Omega$, while $S_2$ is uncorrelated with $Y$.  As a result, we expect the naive estimator to perform relatively badly for $S_1$ (as it ignores the unit-level covariance) compared to $S_2$ and the LIMLK and TC estimators.  To assess the robustness of our estimators to direct effects, we simulate data according to the linear structural models in Subsections \ref{sec:no-direct-effects} (no direct effects) and \ref{sec:direct-effects} (direct effects under INSIDE).\footnote{Code to replicate our simulations and all figures except those in Section~\ref{sec:linear_models} can be found at \url{https://github.com/winston-chou/linear-proxy-metrics}.}

As Figure \ref{fig:no-direct-effects} shows, the naive estimator is heavily biased.  LIMLK and TC are substantially much more accurate than the naive estimator.  The relative accuracy of LIMLK and TC is more apparent for $\beta_1$, which has the greater distortion from the unit-level covariance.  Without direct effects, LIMLK is more efficient compared to TC, with a considerably smaller standard error at smaller values of $n$.

However, LIMLK is highly sensitive to the assumption of no direct effects.  As Figure \ref{fig:direct-effects} shows, when we introduce direct effects, LIMLK is clearly inconsistent for $\beta$.  In contrast, TC is robust and remains consistent for $\beta$.  However, note that, regardless of estimator, the \emph{interpretation} of $\beta$ is sensitive to the presence of direct effects.  If $S$ fully mediates the effect of $A$ on $Y$, $h(S) = \beta S$ is a surrogate index, and the ATE of a new treatment on $Y$ is equal to the ATE on $h(S)$.  If $S$ does not fully mediate the effect of $A$ on $Y$, and direct effects follow the INSIDE assumption, then $\beta$ can be interpreted as the portion of the treatment effect of $Y$ that is mediated by $S$.

\begin{figure*}[p]
    \includegraphics[width=0.92\textwidth]{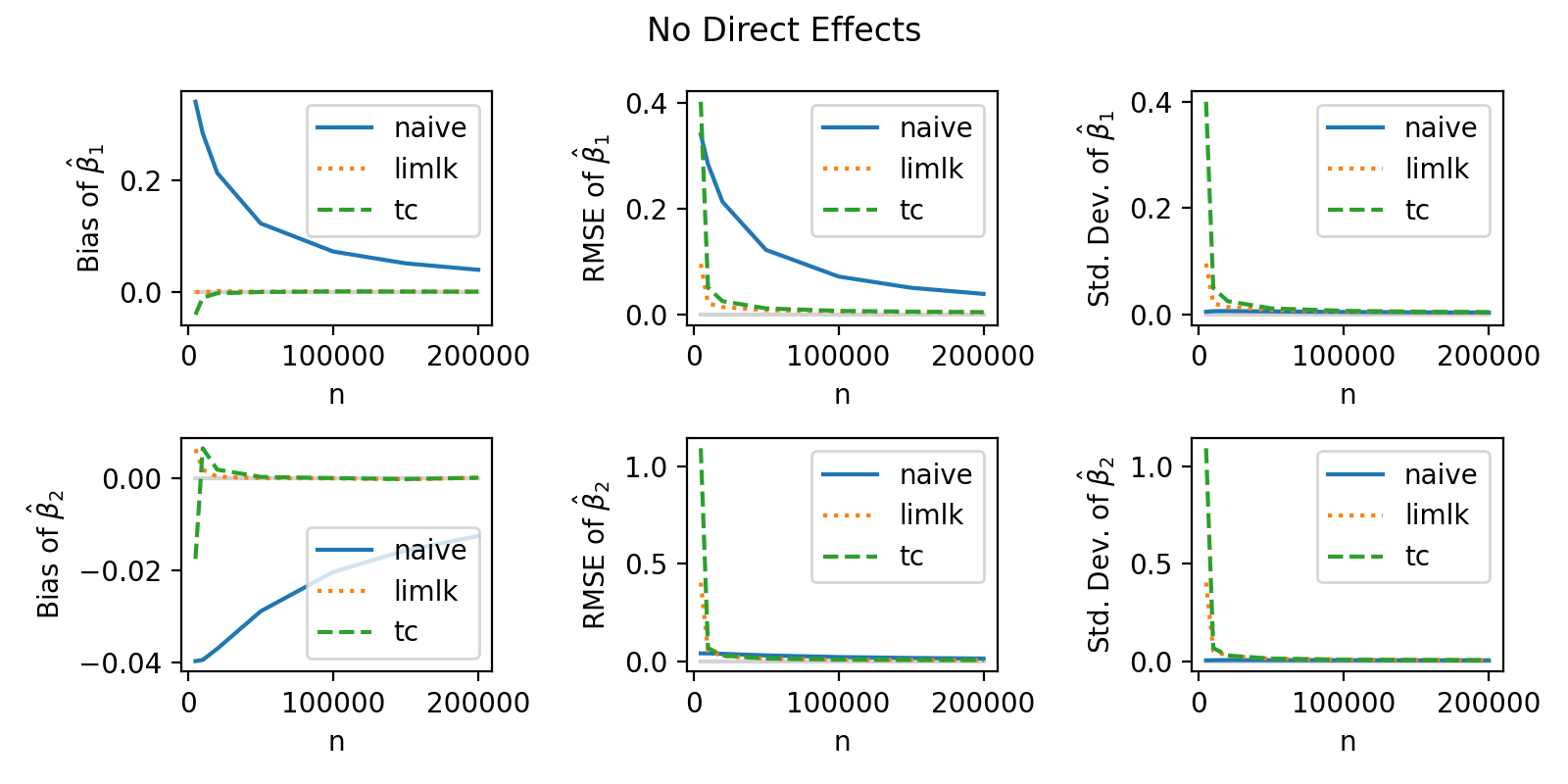}
    \caption{Comparison of Naive, LIML, and Total Covariance Estimators Without Direct Effects.  When there are no direct effects, such that the secondary outcomes fully mediate the effect of the intervention on the primary outcome, our LIMLK (\texttt{limlk}) and Total Covariance (\texttt{tc}) estimators are substantially less biased than the naive regression of the estimated ATEs on the primary outcome on the estimated ATEs on the secondary outcomes (\texttt{naive}).  LIMLK has lower variance than TC, which is most apparent at small sample sizes.}
    \label{fig:no-direct-effects}
\end{figure*}

\begin{figure*}[p]
    \includegraphics[width=0.92\textwidth]{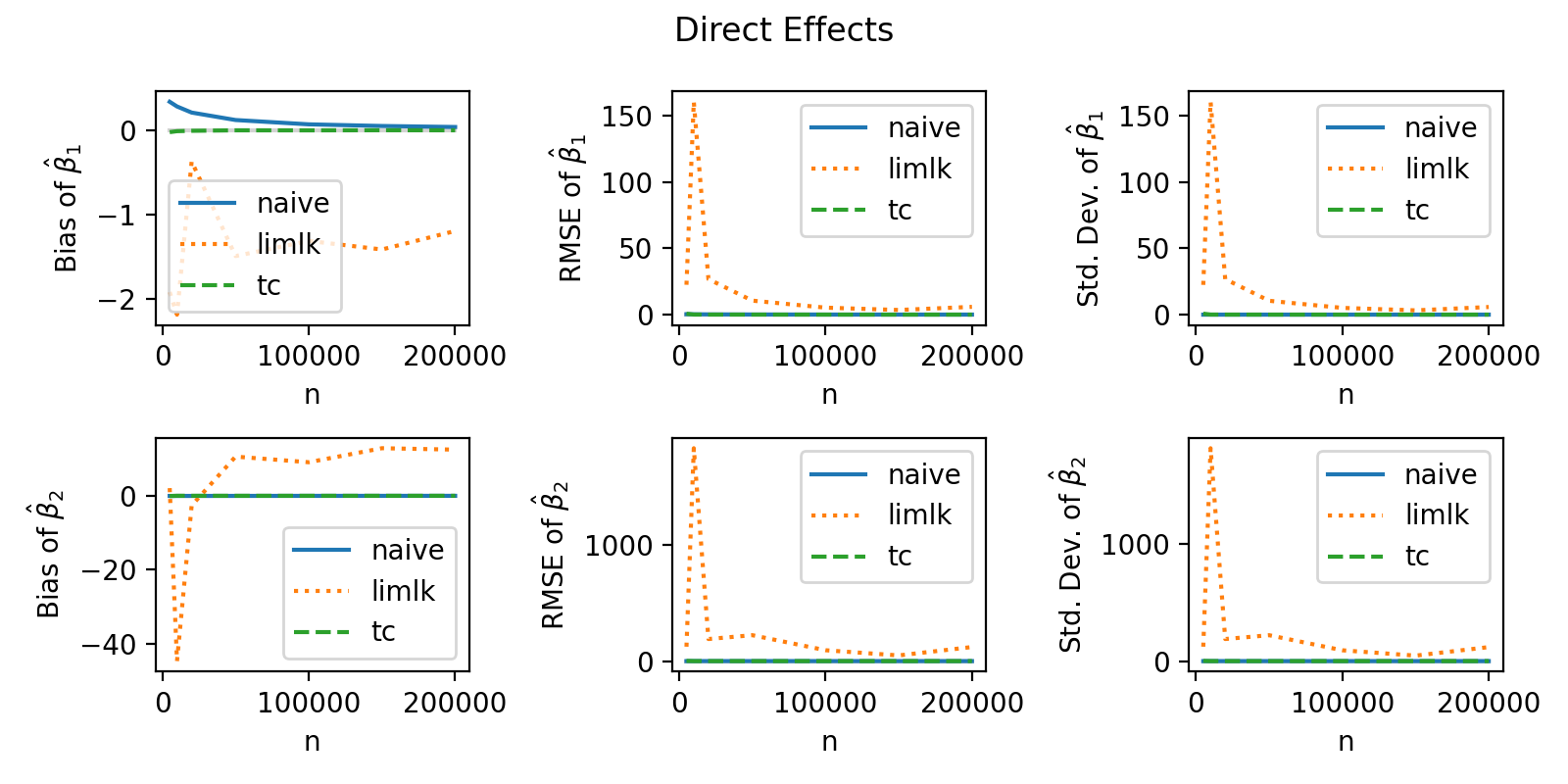}
    \caption{Comparison of Naive, LIML, and Total Covariance Estimators With Direct Effects.  When the intervention directly affects the primary outcome, LIMLK can be extremely biased.  In contrast, TC is robust to direct effects, although the causal interpretation of its estimand changes if there are direct effects.  The naive estimator is biased regardless of whether there are direct effects or not.}
    \label{fig:direct-effects}
\end{figure*}

\section{Linear Models of Treatment Effect Covariances at Netflix}\label{sec:linear_models}

\begin{figure}[!ht]
    \includegraphics[width=0.45\textwidth]{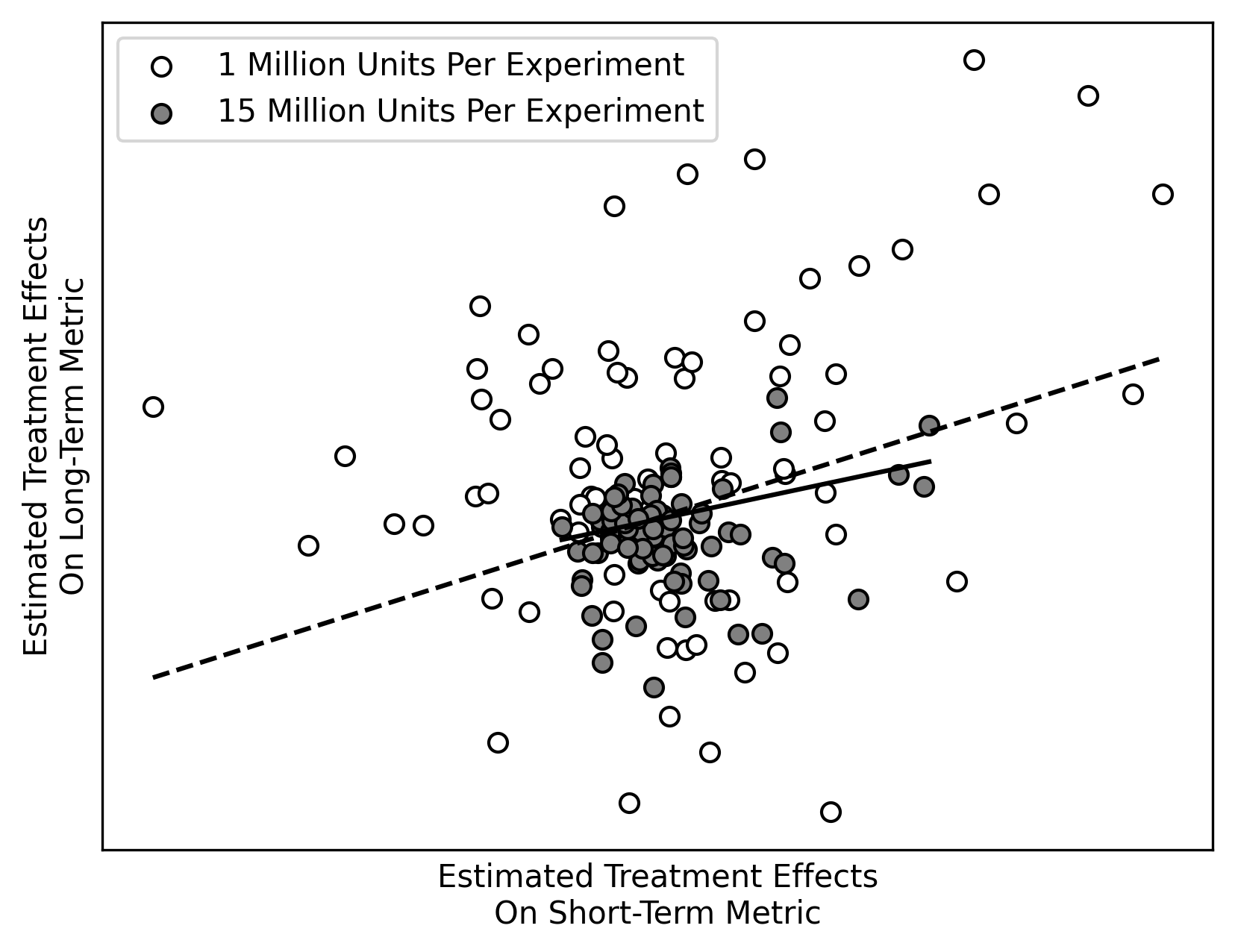}
    \includegraphics[width=0.45\textwidth]{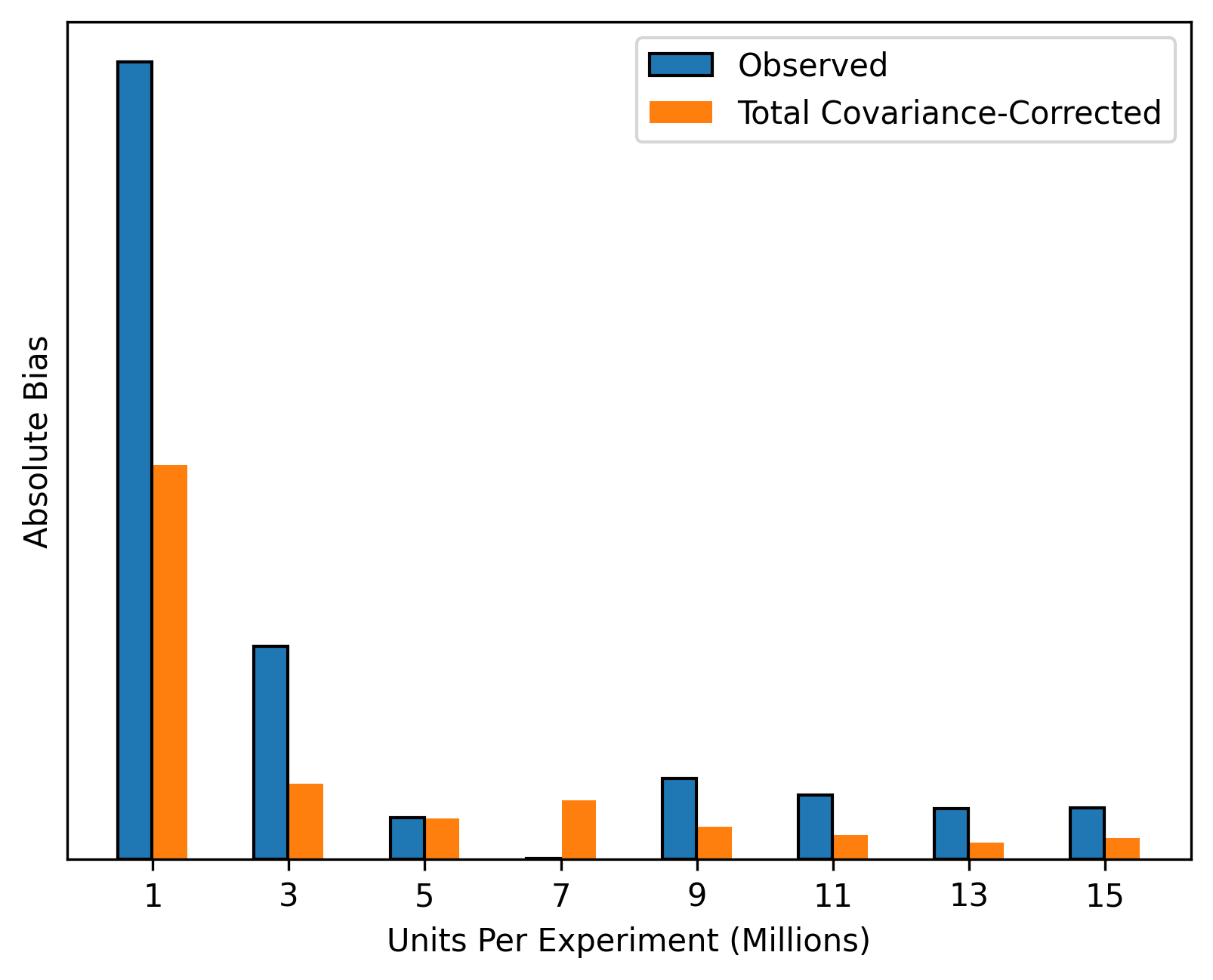}
    \caption{How Our Method Improves Treatment Effect Covariance Estimation at Netflix. This figure shows how correlated measurement error exaggerates the correlation between \emph{estimated} treatment effects, especially at smaller sample sizes (top), and how our TC estimator reduces this bias by subtracting a scaled estimate of the measurement error (bottom).}
    \label{fig:netflix}
\end{figure}

This section discusses the real-world application of our methods to construct linear proxy metric indices for experimentation at Netflix.  Netflix has a sophisticated experimentation platform that runs thousands of experiments on millions of experimental units \cite{forsell2020success}.  Still, because the signal-to-noise ratio of each experiment is small, measurement error poses challenges.  This is shown in the top panel of Figure~\ref{fig:netflix}, which plots the correlation between \emph{estimated} treatment effects on one Short-term Metric and one Long-term Metric across 96 arbitrarily-chosen treatment-control comparisons.  The Short- and Long-term Metrics are highly correlated across experimental units, introducing correlated measurement error that exaggerates the correlation across treatment effects.  To demonstrate the effect of experiment size on this bias, we show the scatterplot of estimated treatment effects after downsampling each experiment to one million units (white circles) and 15 million units (gray circles).  The slope of the OLS in the former scatterplot is about 50\% larger than the equivalent slope in the latter scatterplot.  Note that we have hidden the x- and y-axis tick marks in this figure for confidentiality reasons.

In the bottom panel of Figure~\ref{fig:netflix}, we plot the absolute bias of the observed covariance between the Short- and Long-term Metrics at various subsample sizes, using as ground truth the TC-corrected covariance across non-downsampled experiments.  We also plot the absolute bias of the TC-corrected covariance for comparison.  The median absolute bias reduction from applying our TC estimator is substantial --- approximately 63\%.

In addition to these empirical benefits, our method is also well-suited to the decentralized and rapidly-evolving practice of experimentation at Netflix.  Netflix runs thousands of experiments per year on many diverse parts of the business. Each area of experimentation is staffed by independent data science and engineering teams.  While every team ultimately aims to lift the same north star metrics (e.g., long-term revenue), each has also developed secondary metrics that are more sensitive and practical to measure in short-term A/B tests (e.g., user engagement or latency).  To complicate matters further, teams are constantly innovating on these secondary metrics to find the right balance of sensitivity and impact on the north star metrics.

In this decentralized environment, linear models of relationships between treatment effects are a highly useful tool for coordinating these efforts and aligning them towards the same proxies for the north star:
\begin{enumerate}
    \item \textbf{Managing metric tradeoffs.} Because experiments in one area can affect metrics in another area, there is a need to measure all secondary metrics in all tests, but also to understand the relative impact of these metrics on the north star.  This is so we can inform decision-making when one metric trades off against another metric. 
    \item \textbf{Informing metrics innovation.}  To minimize wasted effort on metric development, it is also important to understand how metrics correlate with the north star ``net of'' existing metrics.  
    \item \textbf{Enabling teams to work independently.}  Lastly, teams need simple tools in order to iterate on their own metrics.  Teams may come up with dozens of variations of secondary metrics, and slow, complicated tools for evaluating these variations are unlikely to be adopted.
\end{enumerate}

Given these needs and the availability of statistics from historical experiments, it has been common for teams to fit linear models to estimated treatment effects to identify promising surrogates and estimate their weights in a linear proxy metric index.  While data scientists are aware that these models can be highly biased, linear models have a convenience and interpretability that is difficult to replace.  Our TC estimator provides a simple way to consistently and robustly estimate the true treatment effect covariance matrix, which supports the construction of these linear proxies, and is actively used to develop proxy metrics at Netflix.

\section{Conclusion}

In this paper, we discuss the construction of proxy metrics via meta-analysis of many experiments.  A useful parameter in this setting is the covariance matrix of treatment effects across metrics in the population of experiments.  While estimating this parameter is computationally convenient, it presents challenges, especially when treatment effects are small relative to unit-level noise.  We present estimators for overcoming these challenges and show how their estimands relate to structural parameters under different causal models.

Throughout, we have assumed that the unit-level covariance matrix is correctly specified and constant across experiments.  This assumption is often reasonable given that it can be estimated on vastly more units than are in any individual experiment (e.g., across the entire user base).  In cases where it is not reasonable, users can also compute Jackknife covariance matrices.  That being said, this is a computationally expensive approach, and future work can explore more scalable ways of relaxing this assumption.

We show that LIMLK is efficient but inconsistent under direct effects while TC is consistent under direct and indirect effects.  This suggests that it is possible to construct a statistical test for direct effects by comparing the LIMLK and TC residuals.  We also leave this as a direction for future research.

Beyond the construction of proxy metrics, there are other opportunities to leverage short-term observations for inference on long-term outcomes. For example, \cite{athey2020combining,imbens2022long,chen2023semiparametric} consider how short-term observations from one randomized experiment can remove confounding in an observational study with long-term observations, and \cite{kallus2020role} consider efficiency gains from including units with no long-term observations. Considering the use of many, albeit weak, experiments as instruments in these settings may uncover new opportunities to target long-term outcomes under weaker assumptions.

\balance

\bibliography{bib}






\end{document}